\newcommand{\normord}[1]{:\mathrel{#1}:}
\renewcommand{\normord}[1]{:\mathrel{#1}:}
\newcommand{\acom}[2]{\left\{#1,#2\right\}}
\newcommand{\px}{\ket{x}\bra{x}}
\newcommand{\sO}{\mathcal{O}}
\newcommand{\sF}[1]{\mathcal{F}\of{\sH_{#1}}}
\renewcommand{\P}{\mathbb{P}}
\begin{document}
\title{How to Partition a Quantum Observable}
\author{Caleb M.\ Webb}
\author{Charles A.\ Stafford}
\affiliation{Department of Physics, University of Arizona, 1118 E.\ 4th Street, Tucson, AZ 85721}

\date{\today}

\begin{abstract}
We present a partition of quantum observables in an open quantum system which is inherited from the division of the underlying Hilbert space or configuration space. It is shown that this partition leads to the definition of an inhomogeneous continuity equation for generic, non-local observables. This formalism is employed to describe the local evolution of the von Neumann entropy of a system of independent quantum particles out of equilibrium. Crucially, we find that all local fluctuations in the entropy are governed by an entropy current operator, implying that the production of entanglement entropy is not measured by this partitioned entropy. For systems linearly perturbed from equilibrium, it is shown that this entropy current is equivalent to a heat current, provided that the system-reservoir coupling is partitioned symmetrically. Finally, we show that any other partition of the coupling leads directly to a divergence of the von Neumann entropy. Thus, we conclude that Hilbert-space partitioning is the only partition of the von Neumann entropy which is consistent with the Laws of Thermodynamics.
\end{abstract}

\maketitle



\section{Introduction}
Consider a global Hilbert space $\sH$, which contains all single-particle degrees of freedom available to the universe. From this we construct the Fock space $\mathcal{F}\of{\sH}$, within which all many-body theories are described. In the study of open quantum systems, one typically divides the universe into complementary sets of orthogonal, single-particle states $\sH=\sH_S\oplus\sH_R$ describing the subsystem of interest $\sH_S$ and a thermal reservoir $\sH_R$. The Fock space then decomposes as the familiar product $\mathcal{F}\of{\sH}\cong\mathcal{F}\of{\sH_S}\otimes\mathcal{F}\of{\sH_R}$.

Denote by $\sL\of{\sF{}}$ the set of linear operators acting on $\mathcal{\sF{}}$. Any observable $\hat{\sO}\in\sL\of{\sF{}}$ may, of course, be written in the form $\hat{\sO}=\hat{\sO}_S+\hat{\sO}_R+\hat{\sO}_{SR}$, where $\hat{\sO}_{S/R}\in\sL\of{\sF{S/R}}$ and $\hat{\sO}_{SR}$ describes the coupling between the two subsystems. One may then ask whether the expectation value $\braket{\hat{\sO}}$ can be sensibly divided into system and reservoir contributions. In particular, what fraction of the averaged coupling term $\braket{\hat{\sO}_{SR}}$ should be assigned to each subsystem? For a partitioning of $\sH$ into subsystems of comparable size it seems intuitive that the coupling should be partitioned symmetrically, assigning $\frac{1}{2}\braket{\hat{\sO}_{SR}}$ to each subsystem. However, in the context of the thermodynamics of open quantum systems, an answer to this question has remained elusive, due to subtleties in the distinction between system internal energy and heat \cite{Ludovico,EOG,bruchQuantumThermodynamicsDriven2016,haughianQuantumThermodynamicsResonantlevel2018,talknerColloquiumStatisticalMechanics2020,strasbergFirstSecondLaw2021a,bergmannGreenFunctionPerspective2021,lacerdaQuantumThermodynamicsFast2023}. 
This thermodynamic bookkeeping typically falls into one of two camps: either half of the coupling Hamiltonian is assigned to the system, as in the symmetric partition, or all of it is. The remaining fraction is then assigned to the reservoir and used to describe heat dissipated. 

In this article, we maintain that any partition of the energetics between two subsystems should be inherited from the division of the single-particle states $\sH\rightarrow\sH_S\oplus\sH_R$. Moreover, it will be shown that such a ``Hilbert-space partition" of any observable leads directly to the symmetric partition.

Of central interest in the study of open quantum systems is the von Neumann entropy $S=-\Tr{\hat{\rho}\ln{\hat{\rho}}}$, where $\hat{\rho}$ is the density matrix\footnote{$\hbar=k_B=1$ throughout this paper.} 
describing the state of the global system in $\sF{}$. A description of the \textit{system entropy} may be constructed by means of the reduced density matrix, defined from $\hat{\rho}$  by tracing out the reservoir degrees of freedom, $\hat{\rho}^{red}_S=\Tr_R{\hat{\rho}}$. The reduced entropy of the system is then defined as $S^{red}_S=-\Tr{\hat{\rho}^{red}_S\ln{\hat{\rho}^{red}_S}}$. Using a quantum master equation to describe the dynamics of the reduced density matrix for the system, it has been shown \cite{Esposito_2010,Potts_2021} that the reduced entropy obeys
\begin{align} \label{Sredcons}
\frac{d}{dt}S^{red}_S=\frac{J_Q}{T}+\frac{d}{dt}\Sigma,
\end{align}
where $J_Q$ is a heat current flowing between the system and reservoir and $\Sigma(t)$ is the entanglement entropy between the subsystems. While interesting from the point of view of quantum information theory, this production of entanglement entropy, we argue, is not thermodynamic in character. 

As a simple illustration of our argument, consider an infinite, fermionic, tight-binding chain in thermal equilibrium at temperature $T$.
\begin{align}
\hat{H}&=-t\SUM{n}{}\hat{c}^{\dagger}_n\hat{c}_{n+1}+\mbox{h.c.},\\
\hat{\rho}&=\frac{e^{-\beta\hat{H}}}{\Tr{e^{-\beta\hat{H}}}}.
\end{align}
Here we have taken the chemical potential to lie in the center of the electron band. With the chain in thermal equilibrium, one might expect that all extensive quantities should be uniformly distributed throughout the chain. In particular, for the entropy of a single site we anticipate that $S=\frac{1}{N}S_{total}$, where $N$ is the number of sites in the chain and
\begin{align} \label{stot}
S_{total}&=-\SUM{k}{}[f_k\ln{f_k}+(1-f_k)\ln{1-f_k}],
\end{align}
with $f_k=\frac{1}{e^{\beta\epsilon_k}+1}$ the Fermi-Dirac distribution for orbital $\ket{k}$ at temperature $T=\beta^{-1}$. On the other hand, the reduced state for a single site will be of the form $\hat{\rho}^{red}=f\hat{n}+(1-f)(1-\hat{n})$ where $\hat{n}$ is the number operator for the site and $f$ gives the probability that this site is occupied. Using the results presented in Ref.\ \cite{Dhar}, it can be shown that
\begin{align}
f=\frac{1}{N}\SUM{k}{}f_k=1/2,
\end{align}
which is to be expected for the occupancy of a single site given $\mu$ in the center of the band, regardless of temperature. The reduced entropy is thus 
\begin{align} \label{Sred} 
S^{red}=-f\ln{f}-(1-f)\ln{1-f}=\ln{2},
\end{align}
independent of temperature. Such an entropy is to be expected for the chain's classical counterpart in a microcanonical ensemble, with $N$ sites filled by $N/2$ particles. The classical picture neglects the entanglement between the fermions occupying the chain; though non-interacting, the antisymmetric nature of the fermionic wavefunctions enforces a maximal entanglement for all states in the ensemble. Indeed, upon comparing \cref{stot} and \cref{Sred}, we see that $S-S^{red}=\frac{\Sigma}{N}$, with $\Sigma$ the total entanglement entropy of the lattice sites. 

In particular, as $T\rightarrow0$ the chain approaches a pure state with all particles maximally entangled. Thus, we should have that $\lim{T}{0}S_{total}(T)=0$, in accordance with the Third Law of Thermodynamics. A comparison between the entropy density $S_{total}/N$ and reduced entropy is shown in \cref{F1}.

\begin{figure}[t]
\begin{tikzpicture}
\node at (0,0) {\includegraphics[width=0.5\textwidth]{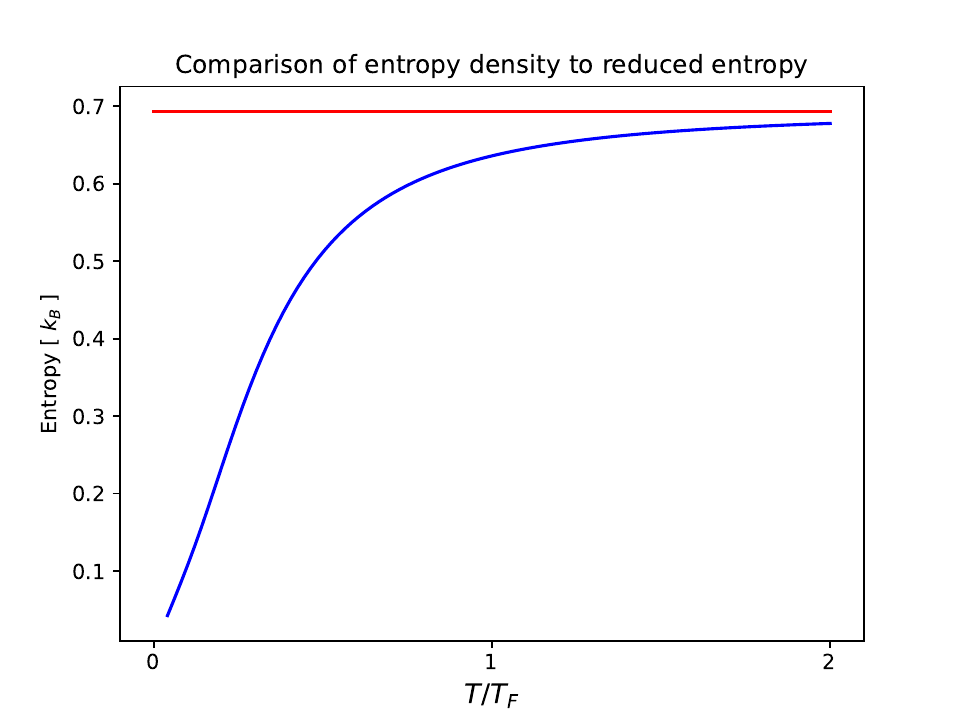}};
\draw[stealth-stealth] (-2.54,-1.17) -- (-2.54,2.315) node[pos=0.5,left] {$\frac{\Sigma}{N}$};
\end{tikzpicture}
\caption{Comparison of entropy density to reduced entropy of a single site in an infinite tight-binding chain at thermal equilibrium and chemical potential $\mu=0$. The entropy density clearly satisfies the Third Law of Thermodynamics, while the reduced entropy is independent of temperature. The difference between the two entropies is given exactly by the entanglement entropy per site of the lattice.} \label{F1}
\end{figure}

Clearly, the reduced entropy does not have a meaningful thermodynamic interpretation, whereas the entropy density does, at least for this toy model. In what follows, it will be shown that the Hilbert-space partition can be employed to give a local description of the entropy which \textit{does} have a meaningful thermodynamic interpretation. Importantly, it will be shown that this partitioned entropy does not measure the entanglement between the subsystems.

\section{Hilbert-Space Partition}
In what follows, we will denote by $\hat{\sO}|_S$ the partition of observable $\hat{\sO}$ over the subspace $\sH_S\subset\sH$. We seek a partition which is
\begin{enumerate}
\item Inherited from the division of single-particle states on $\sH=\sH_S\oplus\sH_R$, and
\item Additive over subspaces: $\braket{\hat{\sO}}=\braket{\hat{\sO}|_S}+\braket{\hat{\sO}|_R}$.
\end{enumerate}

For the sake of clarity, we focus for now on one-body observables $\hat{\sO}=\SUM{nm}{}O_{nm}\hat{c}_n^{\dagger}\hat{c}_m\in\sL\of{\sF{}}$, and discuss the generalizations to $N$-body observables in Sec.\ \ref{sec:many_body}.
The matrix elements $O_{nm}$ can be extracted in the form of an observable in first quantization, $O\in\sL\of{\sH}$, which acts on the single-particle Hilbert space. As it is this space which is being divided into system and reservoir states, the partition should be constructed at the level of these matrix elements. By second quantization of the partitioned operator $O|_S$ we recover $\hat{\sO}|_S\in\sL\of{\sF{S}\otimes\sF{R}}$. This construction is outlined in \cref{fig2}.

For the partition of $O$, we define
\begin{align} \label{partition}
O|_S\equiv\frac{1}{2}\acom{O}{\mathbb{P}_S},
\end{align}
where $\P_S=\SUM{\phi_j\in\sH_S}{}\ket{\phi_j}\bra{\phi_j}$ is an orthogonal projector onto $\sH_S$ and the anti-commutator $\acom{\cdot}{\cdot}$ is included to ensure that $O|_S$ is Hermitian. 
The Fock-space operator $\hat{\sO}|_S$ is then given by
\begin{align} \label{hatpartition}
\hat{\sO}|_S\equiv\SUM{nm}{}\bra{n}O|_S\ket{m}\hat{c}^{\dagger}_nc_m.
\end{align}
That this partition satisfies the additivity condition is a simple consequence of the fact that $\P_S+\P_R=\id$, so that $\hat{\sO}|_S+\hat{\sO}|_R=\hat{\sO}$. 

Furthermore, defining
\begin{align}
\begin{aligned}
    &O_S=\P_SO\P_S,\\
    &O_{SR}=\P_SO\P_R+\P_RO\P_S,\\
    &O_R=\P_RO\P_R,
    \end{aligned}
\end{align}
we see that $O|_S=O_S+\frac{1}{2}O_{SR}$. Performing second quantization, one therefore finds that this partition is equivalent to the symmetric partition, wherein the system-reservoir coupling is partitioned equally between subsystems
\begin{equation}\label{eq:partition_sym}
 \hat{\sO}|_S=\hat{\sO}_S+\frac{1}{2}\hat{\sO}_{SR}.   
\end{equation}
 In the remainder of this paper, we will use $\hat{H}$ to denote the second quantization of an operator $H\in\sL\of{\sH}$.

\begin{figure}[t]
\centering
\resizebox{\columnwidth}{!}{%
\begin{tikzpicture}
\draw[-latex] (-3,1) node[anchor=east] (D) {$\mathcal{F}\of{\sH}$} --++ (1,0) node[anchor=west] (A) {$\sH$};
\draw[-latex] (A.east) --++ (1,0) node[anchor=west] (B) {$\sH_S\oplus\sH_R$};
\draw[-latex] (B.east) --++ (1,0) node[anchor=west] (C) {$\mathcal{F}\of{\sH_S}\otimes\mathcal{F}\of{\sH_R}$};

\path (D.east) --++ (0,-1.5) node (DR) {};
\path (A.west) --++ (0,-1.5) node (AL) {};
\path (A.east) --++ (0,-1.5) node (AR) {};
\path (B.west) --++ (0,-1.5) node (BL) {};
\path (B.east) --++ (0,-1.5) node (BR) {};

\draw[-latex] (DR.center) -- (AL.center) node[anchor=west] {$O$};
\draw[-latex] (AR.center) -- (BL.center);
\draw[-latex] (BR.center) --++ (1,0); 

\path (D.center) --++ (0,-1.5) node {$\hat{\mathcal{O}}$};
\path (B.center) --++ (0,-1.5) node {$O|_S$};
\path (C.center) --++ (0,-1.5) node {$\hat{\mathcal{O}}|_S$};

\end{tikzpicture}
}
\caption{The general scheme for the partitioning of a Fock-space operator $\hat{\sO}$. First, the matrix $O$ in first quantization is constructed from the matrix elements of $\hat{\sO}$. A partition is then constructed for $O\rightarrow O|_S$. Finally, this operator is second-quantized to form the partitioned operator, acting on the partitioned Fock Space.}\label{fig2}
\end{figure}

\section{Time Dependence}
In order to construct the local dynamics of a partitioned expectation value $\braket{\hat{\sO}|_S}(t)$, it will be useful to consider the Heisenberg evolution of the partitioned operators themselves. In what follows, we consider only non-interacting theories. Even in this simple case there is, however, some ambiguity in constructing this evolution: Should one first evolve the operator $\hat{\sO}(t)$ forward in time, and then construct the partition, $\hat{\sO}(t)|_S$? Or partition $\hat{\sO}(0)$ and evolve this forward in time,
$\hat{\sO}|_S(t)$? Contrary to what one may expect, the two approaches are not generally equivalent.
\begin{claim}
$\hat{\sO}(t)|_S = \hat{\sO}|_S(t)$ if and only if $\com{H}{\P_S}=0$.
\end{claim}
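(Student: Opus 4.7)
The plan is to reduce the Fock-space identity to a statement on the single-particle Hilbert space and then handle the anticommutator algebra directly. In the non-interacting setting, $\hat{H}$ is the second quantization of some $H \in \sL(\sH)$, and a standard computation using $[\hat{H}, \hat{c}_j] = -\sum_k H_{jk}\hat{c}_k$ yields $\hat{c}_j(t) = \sum_k (e^{-iHt})_{jk}\hat{c}_k$. Consequently, the Heisenberg-evolved one-body operator $\hat{\sO}(t)$ is the second quantization of $O(t) := e^{iHt} O e^{-iHt}$. Since the map $O \mapsto \hat{\sO}$ is linear and injective on one-body observables, the claimed Fock-space equality is equivalent to the first-quantized identity
\begin{equation*}
\tfrac{1}{2}\acom{O(t)}{\P_S} \;=\; e^{iHt}\,\tfrac{1}{2}\acom{O}{\P_S}\,e^{-iHt},
\end{equation*}
which I will analyze directly.

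For sufficiency, assume $\com{H}{\P_S} = 0$. Then $\P_S$ commutes with $e^{\pm iHt}$, and inserting this commutativity into each term of the left-hand anticommutator allows one to factor $e^{iHt}(\cdot)e^{-iHt}$ outward, immediately reproducing the right-hand side. This direction is a one-line calculation valid for every $O$.

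For necessity, read the claim (as the surrounding discussion suggests) as asserting that partitioning and time evolution commute on all one-body observables. Conjugating the displayed identity by $e^{-iHt}$ on the left and $e^{iHt}$ on the right reduces the requirement to $\acom{O}{A_t} = 0$ for all $O \in \sL(\sH)$, where $A_t := e^{-iHt}\P_S e^{iHt} - \P_S$. Choosing $O = \mathbb{I}$ forces $A_t = 0$ for all $t$, and differentiating at $t=0$ gives $\com{H}{\P_S} = 0$.

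The algebra on each side is short; the real content is the reduction from Fock space to $\sH$, which rests on the identification $\hat{\sO}(t) = \widehat{O(t)}$ in the non-interacting case and the injectivity of second quantization on one-body operators. The main conceptual subtlety — and the one place I would be careful — is the quantifier over $\hat{\sO}$: the equivalence holds cleanly only when the identity is demanded for all one-body observables, which is the natural reading given the framing of the question. For a single fixed $\hat{\sO}$, the necessity step only forces $\com{H}{\P_S}$ to annihilate a subspace determined by $O$, which would weaken the stated iff, so I would flag this reading explicitly before invoking $O = \mathbb{I}$.
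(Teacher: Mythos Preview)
Your reduction to the first-quantized identity $\tfrac{1}{2}\acom{O(t)}{\P_S} = U^{\dagger}\,\tfrac{1}{2}\acom{O}{\P_S}\,U$ is exactly the paper's approach; the paper writes out both sides in this form and states that they coincide iff $U^{\dagger}\P_S U=\P_S$. Your treatment of necessity is actually more complete than the paper's, which dispatches that direction with the single word ``Clearly'': your choice $O=\id$ followed by differentiation of $A_t$ at $t=0$ supplies the missing step, and your caveat about the implicit universal quantifier over $\hat{\sO}$ is apt and goes unaddressed in the paper.
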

\begin{proof}
In the absence of interactions, we have in general that $\hat{U}^{\dagger}\hat{\sO}\hat{U}=\SUM{nm}{}\bra{n}U^{\dagger}OU\ket{m}\hat{c}^{\dagger}_n\hat{c}_m$. Where $\hat{U}$ is the usual evolution operator generated by Hamiltonian $\hat{H}$. Then the partition of the time-evolved operator is
\begin{align}
\hat{\sO}(t)|_S=\frac{1}{2}\SUM{nm}{}\bra{n}\acom{O_H(t)}{\P_S}\ket{m}\hat{c}^{\dagger}_n\hat{c}_m,
\end{align}
with $O_H(t)=U^{\dagger}OU$ the Heisenberg evolution of $O$. On the other hand, the time evolution of the partitioned operator gives
\begin{align} \label{Ooft}
\hat{\sO}|_S(t)= \frac{1}{2}\SUM{nm}{}\bra{n}U^{\dagger}\acom{O}{\P_S}U\ket{m}\hat{c}^{\dagger}_n\hat{c}_m.
\end{align}
Clearly, then, these two expressions will only be equivalent if $\com{H}{\P_S}=0$ so that $U^{\dagger}\P_SU=\P_S$.
\end{proof}
In order to preserve the equivalence between the Schr\"odinger and Heisenberg pictures, one must use \cref{Ooft} to describe the dynamics of the partitioned observable. We emphasize that in general $\com{H}{\P_S}\neq0$, and so contrary to what one may expect, the evolution of the partition is not given by the partitioning of $\hat{\sO}(t)$.

The operator $\sO|_S$ therefore obeys the inhomogeneous continuity equation
\begin{align} \label{cont}
\frac{d}{dt}\sO|_S=J_{\sO}|_S+\Sigma_{\sO}|_S.
\end{align}
With 
\begin{align}
J_{\sO}|_S&=\frac{1}{2}\acom{O_H}{J|_S}, \label{JA}\\
J|_S&=i\com{H}{\P_S}\eqtxt{,}{and} \label{J}\\
\Sigma_{\sO}|_S&=\frac{1}{2}\acom{\frac{d}{dt}O_H}{\P_S}.\label{Sigma}
\end{align}
In the above, the time dependence of the projection operator $\P_S(t)=U^{\dagger}\P_SU$ has been suppressed. Using the fact that the second quantization of a commutator is the commutator of the second quantized operators, it can readily be shown that $\hat{J}|_S=\frac{d}{dt}\hat{N}_S$, where $\hat{N}_S$ is the number operator for subsystem $\sF{S}$, implying that $J|_S$ is a probability current operator. Thus, we interpret $J_{\sO}|_S$ as a current 
operator for transport of $\sO$ into the subspace $\sH_S$. \cref{Sigma} can be written $\Sigma_\sO|_S=\left[\frac{d}{dt}\sO\right]|_S$, and so we interpret this term as the local production of $\sO$ within the subspace $\sH_S$. 
The Fock-space operator 
$\hat{\sO}|_S$ therefore obeys a similar continuity equation
\begin{align} \label{contflux}
\frac{d}{dt}\hat{\sO}|_S=\hat{J}_{\sO}|_S+\hat{\Sigma}_{\sO}|_S.
\end{align}

As an illustrative application of this partition, consider the energy current passing through site $\ket{i}$ in the 
tight-binding model
\begin{align}
    \hat{H}=-t\SUM{i,\delta}{}\hat{c}^{\dagger}_{i}\hat{c}_{i+\delta}+\mbox{h.c.},
\end{align}
where the sum on $\delta$ denotes a sum over nearest neighbors. Using \cref{JA}, we find for the energy current 
\begin{align}
\hat{J}_H|_{\ket{i}}=\frac{it^2}{2}\SUM{\delta,\delta'}{}\of{\hat{c}^{\dagger}_{i+\delta+\delta'}\hat{c}_{i}-\hat{c}^{\dagger}_{i}\hat{c}_{i-\delta-\delta'}},
\end{align}
which implies the textbook definition \cite{Mahan} for the energy current operator:
\begin{align}
\hat{\vec{J}}_H|_{\ket{i}}=-\frac{it^2}{2}\SUM{\delta,\delta'}{}\of{\vec{\delta}+\vec{\delta}'}c^{\dagger}_{i+\delta+\delta'}c_i.
\end{align}

\section{Density of One-Body Observables}

Up until now, we have discussed only partitions over a discrete subspace of $\sH$. The same construction can be applied also to subsets $S\subset\mathbb{R}^n$ by replacing the projector $\P_S$ with the operator $\ket{x}\bra{x}$ for $x\in \mathbb{R}^n$. We thus define the density of a one-body observable $\hat{\sO}$ in much the same way as before:
\begin{align}
    \rho_{\sO}(x)=\frac{1}{2}\acom{O}{\ket{x}\bra{x}}
\end{align}
and
\begin{align} \label{hdensity}
    \hat{\rho}_{\sO}(x)=\SUM{nm}{}\bra{n}\rho_{\sO}(x)\ket{m}\hat{c}^{\dagger}_n\hat{c}_m.
\end{align}
Note that, as with the discrete partition, $\hat{\sO}$ need not be diagonal in position representation to define $\hat{\rho}_{\sO}(x)$. Rather, we argue that 
$\hat{\rho}_{\sO}(x)$ describes the influence of the non-local observable $\hat{\sO}$ at the location $x\in\mathbb{R}^n$. 

This interpretation may be clarified by considering the operator $\hat{\sO}$ in position representation
\begin{align}
    \hat{\sO}&=\int dx\,dy \,\hat{\sO}(x,y)\eqtxt{,}{with} \\
    \hat{\sO}(x,y)&=\SUM{nm}{}\left[O_{nm}\psi_n(x)\psi^*_m(y)\right]\hat{\psi}^{\dagger}(x)\hat{\psi}(y),
\end{align}
where $\hat{\psi}^{\dagger}(x),\hat{\psi}(x)$ are the usual fermionic field creation and annihilation operators, and $\psi_n(x)=\braket{x|n}$. The density in \cref{hdensity} can be rewritten in terms of $\hat{\sO}(x,y)$ as
\begin{align} \label{hrho}
    \hat{\rho}_{\sO}(x)=\frac{1}{2}\int dy\of{\hat{\sO}(x,y)+\hat{\sO}(y,x)}.
\end{align}


Following the same arguments as in the previous section, we find that the density operator obeys a continuity equation
\begin{align} \label{contrho}
    \frac{d}{dt}\hat{\rho}_{\sO}(x,t)=-\nabla\cdot \hat{\vec{j}}_{\sO}(x,t)+\hat{\sigma}_{\sO}(x,t),
\end{align}
where
\begin{align}
    \vec{j}_{\sO}(x,t)&=\frac{1}{2}\acom{O_H}{\vec{j}(x,t)}, \label{jO}\\
    \vec{j}(x,t)&=\frac{1}{2m}U^\dagger\acom{\vec{p}}{\ket{x}\bra{x}}U,\label{j}\\
    \sigma_{\sO}(x,t)&=\frac{1}{2}\acom{\frac{d}{dt}O_H}{U^{\dagger}\ket{x}\bra{x}U}.
\end{align}
In the above, $\vec{p}=-i\nabla$ is the usual momentum operator. Moreover, in slight contrast to \cref{J,Sigma}, we have made explicit the time evolution of the projection operator $U^{\dagger}\ket{x}\bra{x}U$. The definitions of current density in \cref{jO,j} are consequences of the following claim:
\begin{claim}
$i\com{H}{\ket{x}\bra{x}}=-\nabla\cdot \vec{j}(x)$, where $\vec{j}(x)=\frac{1}{2m}\acom{\vec{p}}{\ket{x}\bra{x}}$.
\end{claim}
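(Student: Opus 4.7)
The plan is to verify the identity by taking matrix elements in an arbitrary pair of smooth position-space wavefunctions $\psi,\phi$, thereby reducing the claim to a version of Green's identity for the Laplacian. I would assume the single-particle Hamiltonian has the standard continuum form $H=\vec{p}^{\,2}/2m+V(\hat{x})$. The first observation is that the potential term commutes with $\ket{x}\bra{x}$, since both are diagonal in position representation, so only the kinetic piece contributes to $i\com{H}{\ket{x}\bra{x}}$.

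The central step is then to evaluate the matrix element of the left-hand side. Using $\bra{x}\vec{p}^{\,2}\ket{\phi}=-\nabla^{2}\phi(x)$ and the analogous relation for $\psi^{*}$, one finds
\begin{equation}
\bra{\psi}\,i\com{H}{\ket{x}\bra{x}}\ket{\phi}=\frac{i}{2m}\left[\psi^{*}(x)\nabla^{2}\phi(x)-\bigl(\nabla^{2}\psi^{*}(x)\bigr)\phi(x)\right].
\end{equation}
Separately, I would compute the matrix element of $\vec{j}(x)=\tfrac{1}{2m}\acom{\vec{p}}{\ket{x}\bra{x}}$ by splitting the anticommutator and letting $\vec{p}=-i\nabla$ act on its adjacent position ket. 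This produces
\begin{equation}
\bra{\psi}\vec{j}(x)\ket{\phi}=-\frac{i}{2m}\bigl[\psi^{*}(x)\nabla\phi(x)-\bigl(\nabla\psi^{*}(x)\bigr)\phi(x)\bigr].
\end{equation}
Applying $-\nabla\cdot$ and expanding via the product rule collapses the gradients into Laplacians, reproducing precisely the right-hand side of the previous display. Since $\psi,\phi$ are arbitrary, the operator identity follows.

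The main obstacle is conceptual rather than computational: $\ket{x}$ is a distribution rather than a genuine Hilbert-space vector, so manipulations of $\com{\vec{p}}{\ket{x}\bra{x}}$ formally involve derivatives of delta functions. Sandwiching between smooth test wavefunctions from the outset avoids any ambiguity. A minor additional subtlety is the locality of $V$: the identity relies on $V$ being a multiplication operator in position representation, which is standard for the non-interacting single-particle setting of this section. For a non-local one-body potential the kinetic-only derivation persists, but an additional source term would appear that would naturally be absorbed into the production operator $\sigma_{\sO}$ of \cref{contrho} rather than the current $\vec{j}_{\sO}$.
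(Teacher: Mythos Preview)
Your proof is correct and follows essentially the same route as the paper: eliminate the potential by noting it commutes with $\ket{x}\bra{x}$, then check the operator identity by taking matrix elements between arbitrary states and reducing to Green's identity $\psi^*\nabla^2\phi-(\nabla^2\psi^*)\phi=\nabla\!\cdot\!(\psi^*\nabla\phi-(\nabla\psi^*)\phi)$. The only cosmetic difference is that you compute the two sides separately and then compare, whereas the paper rewrites $i\bra{n}\com{\vec{p}^{\,2}}{\ket{x}\bra{x}}\ket{m}$ directly as a divergence in one line.
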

\begin{proof} 
Only the kinetic part of the Hamiltonian is non-vanishing in the commutator, since for any two states $\ket{n}$ and $\ket{m}$ $$\bra{n}\com{V}{\ket{x}\bra{x}}\ket{m}=\psi_m(x)V\psi^*_n(x)-\psi^*_n(x)V\psi_m(x)=0,$$ assuming that $V$ acts as a simple multiplication operator in position representation. Then 
\begin{align*}
i\com{H}{\ket{x}\bra{x}}=\frac{i}{2m}\com{\vec{p}^{\,2}}{\ket{x}\bra{x}}.
\end{align*}
To proceed, we consider the matrix elements
\begin{align*}
i\bra{n}\com{\vec{p}^{\,2}}{\ket{x}\bra{x}}\ket{m}&=\nabla\cdot\of{\psi_m\of{i\nabla\psi_n}^*+\psi_n^*\of{i\nabla\psi_m}}\\
&=-\nabla\cdot\bra{n}\acom{\vec{p}}{\ket{x}\bra{x}}\ket{m}.
\end{align*}
Therefore, for any two states $\ket{n}$ and $\ket{m}$, $\bra{n}i\com{H}{\ket{x}\bra{x}}\ket{m}=-\frac{1}{2m}\nabla\cdot\bra{n}\acom{\vec{p}}{\ket{x}\bra{x}}\ket{m}$, which proves the claim.
\end{proof}
Note that $\bra{\psi}\vec{j}(x,t)\ket{\psi}$ gives the usual probability current density of the state $\ket{\psi}$.

\section{Entropy Partitions}
Of particular interest is the partition of the von Neumann entropy. Define in the Schr\"odinger picture the entropy operator
\begin{align}
    \hat{S}(t)=-\ln{\hat{\rho}(t)},
\end{align}
where $\hat{\rho}(t)$ is the density matrix for the global ensemble. Then $S(t)=\braket{\hat{S}(t)}=S(0)$ due to the unitary evolution of $\hat{\rho}$. The Heisenberg picture entropy operator is then $\hat{S}_H(t)=-\hat{U}^{\dagger}\ln{\hat{\rho}(t)}\hat{U}=\hat{S}(0)$. So the entropy operator is constant, in agreement with our expectation that the global entropy should be constant under unitary evolution.

In the absence of inter-particle interactions, the state of the system may be taken to be a product of the form
\begin{align} \label{rho}
    \hat{\rho}=\PROD{\ket{k}}{}\left[f_k\hat{c}^{\dagger}_k\hat{c}_k+(1-f_k)\hat{c}_k\hat{c}^{\dagger}_k\right],
\end{align}
where $\left\{\ket{k}\right\}$ is any set of single-particle orbitals which span $\sH$ and $f_k$ describes the probability that the state $\ket{k}$ is occupied. We define the {\it statistical basis} to be the set of single-particle orbitals over which the density matrix factorizes. 
Such a state may describe, for example, a quantum system initially in equilibrium that is subsequently acted upon by a time-dependent external force.

For such a product state, the entropy operator becomes a sum of particle- and hole-ordered single-particle operators
\begin{align}
    \hat{S}=-\SUM{k}{}\ln{f_k}\hat{c}^{\dagger}_k\hat{c}_k-\SUM{k}{}\ln{1-f_k}\hat{c}_k\hat{c}^{\dagger}_k.
\end{align}
The arguments of the previous sections hold just as well for a hole-ordered operator, and so the entropy may be partitioned as in \cref{partition,hatpartition}. In particular, its partition and density obey the continuity equations
\begin{align} 
    &\frac{d}{dt}\hat{S}|_S=\hat{J}_S|_S \label{Scont},\\
    &\frac{d}{dt}\hat{\rho}_S(x)=-\nabla\cdot\hat{\vec{j}}_S(x),\label{Scontdens}
\end{align}
where 
\begin{eqnarray}
\hat{\rho}_S(x) &=& -\frac{1}{2} \sum_{k,k'} \psi_{k'}(x) \psi^\ast_k(x)  \left[ \ln{f_k f_{k'}} c^\dagger_k c_{k'} \right. \nonumber \\
&+& \left. \ln{(1-f_k)(1-f_{k'})}c_{k'} c^\dagger_k \right],
\end{eqnarray}
\begin{eqnarray}\label{eq:vecjs}
\hat{\vec{j}}_S(x) &=& -\frac{1}{2} \sum_{k,k'}  \langle k|\,\vec{j}(x)|k'\rangle \left[ \ln{f_k f_{k'}} c^\dagger_k c_{k'} \right. \nonumber \\
&+& \left. \ln{(1-f_k)(1-f_{k'})}c_{k'} c^\dagger_k \right]
\end{eqnarray}
are the entropy density operator and the entropy current density operator, respectively. 
The net entropy current operator 
$\hat{J}_S|_S$ 
can be obtained as minus the surface integral of Eq.\ \eqref{eq:vecjs}.


Crucially, as a consequence of global entropy conservation under unitary evolution, we find that there is no local entropy production in \cref{Scont,Scontdens}. Comparing to \cref{Sredcons}, we interpret this to mean that the entropy partition proposed in this article does not measure entanglement entropy between subsystems. This entropy partition may therefore provide a more faithful description of the local thermodynamic entropy of the system \cite{kumar2024}. 


\subsection{Consistency with the Third Law}
Applying the Hilbert space partition to the entropy operator leads to a system entropy operator
\begin{align} \label{Spart}
    \hat{S}|_S=\hat{S}_S+\frac{1}{2}\hat{S}_{SR}.
\end{align}
In light of the above discussion, this would imply that the heat added to an open quantum system should be defined as \cite{kumar2024}
\begin{align}
    \dj Q_S=d\!\braket{\hat{H}_S+\frac{1}{2}\hat{H}_{SR}}-\mu d \!\braket{\hat{N}_S} - \dj W_S
\end{align}
for linear deviations from equilibrium, where $\hat{N}_S$ is the system number operator and $W_S$ is the external work done on the system.

Various authors \cite{Ludovico,EOG,bruchQuantumThermodynamicsDriven2016,haughianQuantumThermodynamicsResonantlevel2018,strasbergFirstSecondLaw2021a,bergmannGreenFunctionPerspective2021,lacerdaQuantumThermodynamicsFast2023} have suggested including different fractions of the coupling energy $\braket{\hat{H}_{SR}}$ in the heat partition. However, it can be shown that any partition of the entropy apart from \cref{Spart}, corresponding to the symmetric partition advocated in Refs.\ \cite{Ludovico,bruchQuantumThermodynamicsDriven2016,haughianQuantumThermodynamicsResonantlevel2018,kumar2024}, leads to a violation of the Third Law of Thermodynamics. Define the $\alpha$-partition of $\hat{S}$ as 
\begin{align}
    \hat{S}^{\alpha}|_S=\hat{S}_S+\alpha\hat{S}_{SR}.
\end{align}
\begin{claim}
Let $\hat{\rho}$ be a product state of the form \cref{rho}. Then for any $\alpha\neq1/2$,
$\braket{\hat{S}^{\alpha}|_S}$ diverges in the limit $f_k\rightarrow0,1$ for any k.
\end{claim}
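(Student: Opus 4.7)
The plan is to evaluate $\braket{\hat{S}^{\alpha}|_S}$ explicitly on the product state \cref{rho} and read off the singular behavior as $f_{k_0}\to 0,1$ for a fixed mode $\ket{k_0}$. Split $\hat{S}=\hat{S}_p+\hat{S}_h$ into the particle- and hole-ordered pieces of the excerpt, with first-quantized operators $S_{p}=-\SUM{k}{}\ln{f_k}\ket{k}\bra{k}$ and $S_{h}=-\SUM{k}{}\ln(1-f_k)\ket{k}\bra{k}$; both are diagonal in the statistical basis, so $\com{S_{p/h}}{\rho_1}=0$, where $\rho_1=\SUM{k}{}f_k\ket{k}\bra{k}$ is the one-particle density matrix. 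The key algebraic step is to rewrite the $\alpha$-partition of any one-body operator as
\begin{equation*}
O^{\alpha}|_S=(1-2\alpha)\,\P_S O\P_S+\alpha\acom{O}{\P_S},
\end{equation*}
which isolates the correction to the symmetric partition.

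Using $\braket{\hat{c}^{\dagger}_n\hat{c}_m}=\bra{m}\rho_1\ket{n}$ together with $\braket{\hat{c}_m\hat{c}^{\dagger}_n}=\bra{m}(\id-\rho_1)\ket{n}$, and writing $P_{kk'}\equiv\bra{k}\P_S\ket{k'}$, a direct trace evaluation gives
\begin{equation*}
\braket{(\hat{S}_p)^{\alpha}|_S}=-(1-2\alpha)\SUM{k,k'}{}f_k\ln{f_{k'}}|P_{kk'}|^2-2\alpha\SUM{k}{}f_k\ln{f_k}\,P_{kk},
\end{equation*}
with the hole-sector expression obtained by $f\to1-f$. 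At $\alpha=1/2$ the double sum vanishes and one recovers the manifestly finite result $-\SUM{k}{}\left[f_k\ln{f_k}+(1-f_k)\ln(1-f_k)\right]P_{kk}$; for $\alpha\neq1/2$ the nonzero prefactor $(1-2\alpha)$ allows the double sum to carry the logarithmic divergence.

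Fix $\ket{k_0}$ and take $f_{k_0}\to0$ with the other occupations held fixed. Every term containing $f_{k_0}\ln{f_{k_0}}$ vanishes, so the $P_{kk}$-piece stays finite and only the $k'=k_0,\,k\neq k_0$ slice of the double sum survives, contributing $-(1-2\alpha)\ln{f_{k_0}}\SUM{k\neq k_0}{}f_k|P_{kk_0}|^2$. Idempotency $\P_S^{2}=\P_S$ yields $\SUM{k}{}|P_{kk_0}|^2=P_{k_0k_0}$, hence $\SUM{k\neq k_0}{}|P_{kk_0}|^2=P_{k_0k_0}(1-P_{k_0k_0})$, which is strictly positive whenever $\ket{k_0}$ straddles $\sH=\sH_S\oplus\sH_R$. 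Since at least one other $f_k$ remains positive, the coefficient of $\ln{f_{k_0}}$ is strictly positive and $\ln{f_{k_0}}\to-\infty$ forces divergence for any $\alpha\neq1/2$. The limit $f_{k_0}\to1$ is handled dually: the hole sector diverges via $\ln(1-f_{k_0})\to-\infty$, while the particle sector stays finite.

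The main obstacle is the implicit genericity hypothesis. If $\ket{k_0}$ happens to be an eigenstate of $\P_S$ (that is, lies entirely in $\sH_S$ or $\sH_R$), then $P_{k_0k_0}\in\{0,1\}$ and the would-be divergent prefactor vanishes, leaving $\braket{\hat{S}^{\alpha}|_S}$ finite for that mode. The claim is therefore sharp only when the statistical basis is not block-diagonal with respect to the Hilbert-space partition—a condition generically satisfied whenever the system-reservoir coupling delocalizes the eigenstates of $\hat{\rho}$ across both subspaces.
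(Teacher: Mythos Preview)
Your proposal is correct and follows essentially the same route as the paper: the key identity $O^{\alpha}|_S=(1-2\alpha)\,\P_S O\P_S+\alpha\acom{O}{\P_S}$, the explicit evaluation of $\braket{\hat{S}^{\alpha}|_S}$ in the statistical basis, and the identification of the $(1-2\alpha)$-weighted double sum as the divergent piece are exactly what the paper does. Your use of idempotency to express the off-diagonal weight as $P_{k_0k_0}(1-P_{k_0k_0})$ and your explicit statement of the genericity caveat go slightly beyond the paper's presentation, but the core argument is the same.
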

\begin{proof}
Define the operators in first quantization $S^{n}=-\SUM{k}{}\ln{f_k}\ket{k}\bra{k}$ and $S^{p}=-\SUM{k}{}\ln{1-f_k}\ket{k}\bra{k}$ so that $\hat{S}=\hat{S}^n+\hat{S}^p$ with $\hat{S}^n$ and $\hat{S}^p$ being their corresponding particle- and hole-like quantizations. We will focus our attention on $S^n$, anticipating that the treatment for $S^p$ will be identical, and drop the superscripts.

The coupling term can therefore be written as the second quantization of the operator
\begin{align}
S_{SR}&=\P_SS\P_R+\P_RS\P_S\\
\implies S^{\alpha}|_S&=\P_S S\P_S+\alpha\of{\P_SS\P_R+\P_RS\P_S}
\end{align}
If we write $1=2\alpha+(1-2\alpha)$ and make use of $\P_S+\P_R=\id$, then the last line becomes
\begin{align} \label{eq5}
S^{\alpha}|_S=(1-2\alpha)\P_SS\P_S+\alpha\acom{\P_S}{S}
\end{align}
Consider now the matrix elements in the statistical basis, 
$\left\{\ket{k}\right\}$\footnote{$S$ is diagonal in this basis.}:
\begin{align}
\bra{k}S^{\alpha}|_S\ket{k}=(1-2\alpha)\SUM{k'}{}s_{k'}|\bra{k}\P_S\ket{k'}|^2+2\alpha s_k\bra{k}\P_S\ket{k},
\end{align}
where $s_k=-f_k\ln{f_k}-(1-f_k)\ln{1-f_k}$.
We find, then, for the partitioned entropy
\begin{widetext}
\begin{align} \label{wide}
\braket{\hat{S}^{\alpha}|_S}=&(1-2\alpha)\SUM{kk'}{}\of{-f_k\ln{f_{k'}}-(1-f_k)\ln{1-f_{k'}}}|\bra{k}\P_S\ket{k'}|^2+2\alpha\SUM{k}{}s_k\bra{k}\P_S\ket{k}.
\end{align}
\end{widetext}
 The first term in Eq.\ \eqref{wide} diverges if any $f_k\in\{0,1\}$, unless $\alpha=1/2$ or $\bra{k}\P_S\ket{k'}=\delta_{kk'}$. The latter condition applies only to a partition of the statistical basis, and is certainly not true in general. Thus, we conclude that for a general partition of the entropy of a partially pure state,\footnote{We define a partially pure state to be any state of the system containing definite occupancies, $f_k=0,1$.} $\braket{\hat{S}^{\alpha}|_S}$ diverges if $\alpha\neq1/2$.
\end{proof}

From \cref{eq5} we conclude that the only well defined partition $S^{\alpha}|_S$ is $S^{1/2}|_S=\frac{1}{2}\acom{\P_S}{S}$, the Hilbert-space partition. In particular, this implies that the partitioned entropy of the Gibbs state diverges in the limit $T\rightarrow0$ if $\alpha\neq1/2$, since all states above the Fermi level will be unoccupied with definite probability. We conclude that the Hilbert-space partition is the only thermodynamically consistent partition of the entropy and heat for an open quantum system in equilibrium with its surroundings.

Note, though, that our conclusion is even stronger than this. From \cref{wide} it is clear that the partitioned entropy will diverge for $\alpha\neq1/2$ if there are \textit{any} localized pure states in the product ensemble \cref{rho}.

\section{Many-Body Operators}
\label{sec:many_body}
Until now, we have considered only partitions for one-body observables. It is, however, also possible to construct a partition for $N$-body observables based on the partition of the single-particle Hilbert space. 
In the remainder of this section, we generalize the above partitions first to $2-$body, and then $N$-body observables.




\subsection{Two-Body Operators}
Consider first a two-body operator $O=\sum_{ijkl}A_{ijkl}\ket{ij}\bra{kl}$ which acts on the Hilbert space $\sH\otimes\sH$. We wish to generalize the partition of $O$ onto a subspace $\sH_S\subset\sH$ in such a way that $O|_{S}+O|_{S^C}=O$, where $S^C=\SET{\ket{j}\in\sH}{\ket{j}\not\in \sH_S}.$ This can be accomplished by adjusting the projection operator as follows $\mathbb{P}_S\rightarrow\frac{1}{2}\mathbb{P}_S\hat{\oplus}\mathbb{P}_S\equiv\frac{1}{2}\of{\mathbb{P}_S\otimes\id+\id\otimes\mathbb{P}_S}$:
\begin{align}\label{twobodyrho}
O|_{S}\equiv \frac{1}{2}\acom{O}{\frac{\mathbb{P}_S\hat{\oplus}\mathbb{P}_S}{2}}.
\end{align}
Then, because $\of{A\hat{\oplus}B}+\of{C\hat{\oplus}D}=\of{A+C}\hat{\oplus}\of{B+D}$ and $\mathbb{P}_S+\mathbb{P}_{S^C}=\id$, $O|_{S}+O|_{S^C}=\frac{1}{2}\acom{O}{\frac{2\id}{2}}=O$ as needed. It then follows that the partition of the second quantized operator $\hat{\sO}$ is
\begin{align}
    \hat{\sO}|_S=\SUM{ijkl}{}\bra{ij}O|_S\ket{kl}\hat{c}^{\dagger}_{i}\hat{c}^{\dagger}_j\hat{c}_k\hat{c}_l.
\end{align}

One may generalize the density of observables in the same manner, from which we obtain, in position representation,
\begin{align} \label{rho2}
    \hat{\rho}_{\sO}(x)=\frac{1}{4}\int&{dwdydz}[\hat{\sO}(w,x,y,z)+\hat{\sO}(x,w,y,z)\\&+\hat{\sO}(w,y,x,z)+\hat{\sO}(w,y,z,x)]. \nonumber
\end{align}

As motivation for this construction, consider the Coulomb interaction 
\begin{align}
    \hat{V}=\frac{1}{2}\int{dxdy}\frac{\normord{\hat{\rho}(x)\hat{\rho}(y)}}{|x-y|},
\end{align}
where $\hat{\rho}(x)=\hat{\psi}^{\dagger}(x)\hat{\psi}(x)$ and $\normord{\hat{A}}$ denotes the normal ordering of $\hat{A}$. Defining $$\hat{V}(w,x,y,z)=\frac{\delta(w-x)\delta(z-y)}{2|x-y|}\of{\hat{\psi}^{\dagger}(x)\hat{\psi}^{\dagger}(y)\hat{\psi}(w)\hat{\psi}(z)}$$ we see that $\hat{V}=\int{dwdxdydz}\hat{V}(w,x,y,z)$. \cref{rho2} then implies that the energy density is
\begin{align}
    \hat{\rho}_V(x)&=\normord{\hat{V}(x)\hat{\rho}(x)}\eqtxt{,}{with}\\
    \hat{V}(x)&=\frac{1}{2}\int{dy}\frac{\hat{\rho}(y)}{|x-y|},
\end{align}
in analogy with the classical description.

\subsection{N-Body Observables}
For the matrix elements of an $N$-body operator $O\in\sL\of{\sH^N}$ we define the partition
\begin{align}\label{ON}
    O|_S=\frac{1}{2}\acom{O}{\frac{1}{{N}}\hat{\bigoplus}_{i=1}^{N}\P_S}.
\end{align}

\begin{claim}
    For any $N$-body observable $\hat{\sO}$, its partition over $\sH_S\subset\sH$ is given by
    \begin{align}\label{best}
        \hat{\sO}|_S&=\frac{1}{2N}\acom{\hat{\sO}}{\hat{N}_S}-\frac{1}{N}\normord{\hat{\sO}\hat{N}_S},\\
        \hat{N}_S&=\SUM{\ket{n}\in\sH_S}{}\hat{c}^{\dagger}_n\hat{c}_n. \nonumber
    \end{align}
\end{claim}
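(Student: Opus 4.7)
The strategy is to verify \cref{best} by expanding both sides into the same normal-ordered form. I would work in a basis of $\sH$ adapted to the decomposition $\sH=\sH_S\oplus\sH_R$, so that each basis vector $\ket{n}$ lies entirely in either $\sH_S$ or $\sH_R$ and $\P_S\ket{n}=\ket{n}$ whenever $\ket{n}\in\sH_S$, else zero. In this basis the $N$-body projector $\hat{\bigoplus}_{i=1}^N\P_S$ acts diagonally on product states, multiplying $\ket{l_1\dots l_N}$ by the counting integer $n_S(l)\equiv|\{i:\ket{l_i}\in\sH_S\}|$. Writing $\hat{\sO}$ in its normal-ordered form
\begin{equation*}
\hat{\sO}=\sum_{I,J}A_{I;J}\,\hat{c}_{i_1}^{\dagger}\cdots\hat{c}_{i_N}^{\dagger}\hat{c}_{j_N}\cdots\hat{c}_{j_1},
\end{equation*}
with $A_{I;J}=\bra{i_1\dots i_N}O\ket{j_1\dots j_N}$, a direct evaluation of the anticommutator in \cref{ON} at the level of matrix elements yields
\begin{equation*}
\hat{\sO}|_S=\sum_{I,J}\frac{n_S(I)+n_S(J)}{2N}\,A_{I;J}\,\hat{c}_{i_1}^{\dagger}\cdots\hat{c}_{i_N}^{\dagger}\hat{c}_{j_N}\cdots\hat{c}_{j_1}.
\end{equation*}

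For the right-hand side of \cref{best} I would apply the generalized Wick theorem to the product of the already normal-ordered operators $\hat{\sO}$ and $\hat{N}_S=\sum_{\ket{n}\in\sH_S}\hat{c}_n^{\dagger}\hat{c}_n$. The only surviving contractions are of an annihilator $\hat{c}_{j_k}$ from $\hat{\sO}$ with the creator $\hat{c}_n^{\dagger}$ in $\hat{N}_S$. The sign analysis is the crux of the argument: anticommuting $\hat{c}_n^{\dagger}$ past the $k-1$ annihilators to its left contributes $(-1)^{k-1}$; after contracting and summing $\delta_{j_k n}$ over $\ket{n}\in\sH_S$, moving the surviving $\hat{c}_{j_k}$ back into its original slot to restore the shape of $\hat{\sO}$ contributes another $(-1)^{k-1}$. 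The two signs cancel, and summing over $k$ produces the counting factor $n_S(J)$, so that
\begin{equation*}
\hat{\sO}\hat{N}_S-\normord{\hat{\sO}\hat{N}_S}=\sum_{I,J}n_S(J)\,A_{I;J}\,\hat{c}_{i_1}^{\dagger}\cdots\hat{c}_{i_N}^{\dagger}\hat{c}_{j_N}\cdots\hat{c}_{j_1}.
\end{equation*}
The analogous computation for $\hat{N}_S\hat{\sO}$, with $\hat{c}_n$ now traversing the creation block of $\hat{\sO}$, yields the same expression with $n_S(I)$ in place of $n_S(J)$.

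Adding the two identities and using the fact that normal ordering is insensitive to the initial order, $\normord{\hat{\sO}\hat{N}_S}=\normord{\hat{N}_S\hat{\sO}}$, one obtains $\acom{\hat{\sO}}{\hat{N}_S}-2\normord{\hat{\sO}\hat{N}_S}=2N\,\hat{\sO}|_S$, which rearranges into \cref{best}. The main obstacle is precisely the sign bookkeeping in the two Wick expansions just sketched; once the cancellation of the two $(-1)^{k-1}$ signs is verified for every $k$, what remains is a transparent counting argument matching the prefactors generated by $\P_S$ on the left-hand side.
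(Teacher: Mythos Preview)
Your proposal is correct and follows essentially the same route as the paper's proof: both apply Wick's theorem to $\acom{\hat{\sO}}{\hat{N}_S}$ and identify the contraction terms with the second quantization of \cref{ON}, leaving $\frac{1}{N}\normord{\hat{\sO}\hat{N}_S}$ as the remainder. Your version simply makes explicit the sign bookkeeping and the counting factors $n_S(I),n_S(J)$ that the paper leaves implicit.
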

\begin{proof}
    The proof follows straightforwardly from an application of Wick's theorem to the anticommutator $\frac{1}{2N}\acom{\hat{\sO}}{\hat{N}_S}$. Upon comparing the normal ordering to the second quantization of \cref{ON}, one can see that they differ by exactly $\frac{1}{N}\normord{\hat{\sO}\hat{N}_S}$.
\end{proof}
\begin{cor}
The partition of any observable $\hat{\sO}|_S$ obeys the continuity equation in \cref{contflux}, even in the presence of interactions.
\end{cor}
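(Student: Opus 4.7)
My plan is to take the Heisenberg time derivative of the explicit expression
\begin{equation*}
\hat{\sO}|_S = \tfrac{1}{2N}\acom{\hat{\sO}}{\hat{N}_S} - \tfrac{1}{N}\normord{\hat{\sO}\hat{N}_S}
\end{equation*}
from the preceding claim and to identify the resulting terms with $\hat{\Sigma}_{\sO}|_S$ and $\hat{J}_{\sO}|_S$. Because \cref{best} is a purely algebraic identity in the Fock algebra, it is available without assumption on the form of $\hat{H}$, and I may differentiate both sides using $\frac{d}{dt}\hat{X} = i\com{\hat{H}}{\hat{X}}$.

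For the anticommutator, the usual Leibniz rule gives
$i\com{\hat{H}}{\acom{\hat{\sO}}{\hat{N}_S}} = \acom{\dot{\hat{\sO}}}{\hat{N}_S} + \acom{\hat{\sO}}{\dot{\hat{N}}_S}$,
with $\dot{\hat{\sO}} = i\com{\hat{H}}{\hat{\sO}}$ and $\dot{\hat{N}}_S = i\com{\hat{H}}{\hat{N}_S} \equiv \hat{J}|_S$, the boundary current. The central technical step is the analogous identity for the normal-ordered product,
\begin{equation*}
i\com{\hat{H}}{\normord{\hat{\sO}\hat{N}_S}} = \normord{\dot{\hat{\sO}}\hat{N}_S} + \normord{\hat{\sO}\dot{\hat{N}}_S}.
\end{equation*}
I would establish this by writing $\normord{\hat{\sO}\hat{N}_S} = \hat{\sO}\hat{N}_S - \hat{C}$, where $\hat{C}$ collects the Wick contractions and is of strictly lower polynomial degree than the unordered product. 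The ordinary Leibniz rule handles $\hat{\sO}\hat{N}_S$ exactly, and induction on the degree of the contraction terms shows that $i\com{\hat{H}}{\hat{C}}$ supplies precisely the contributions needed to re-impose the normal ordering on the right-hand side. This argument rests only on the canonical anticommutation relations and the linearity of $\normord{\,\cdot\,}$, so it is insensitive to whether $\hat{H}$ contains interaction terms.

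With both Leibniz identities in hand, the time derivative groups naturally as
\begin{equation*}
\frac{d}{dt}\hat{\sO}|_S = \Bigl[\tfrac{1}{2N}\acom{\dot{\hat{\sO}}}{\hat{N}_S} - \tfrac{1}{N}\normord{\dot{\hat{\sO}}\hat{N}_S}\Bigr] + \Bigl[\tfrac{1}{2N}\acom{\hat{\sO}}{\dot{\hat{N}}_S} - \tfrac{1}{N}\normord{\hat{\sO}\dot{\hat{N}}_S}\Bigr].
\end{equation*}
By direct application of \cref{best}, the first bracket is $\dot{\hat{\sO}}|_S$, the partition of the time derivative of $\hat{\sO}$, which we identify with $\hat{\Sigma}_{\sO}|_S$. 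The second bracket has the identical structural form but with $\hat{N}_S$ replaced by the boundary current $\hat{J}|_S = \dot{\hat{N}}_S$; we take it to define the current operator $\hat{J}_{\sO}|_S$ for transport of $\sO$ into $\sH_S$. Summing the two reproduces \cref{contflux}.

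The main obstacle is the normal-ordering Leibniz step. In a free theory it is essentially automatic because $\dot{\hat{\sO}}$ and $\dot{\hat{N}}_S$ stay at the same polynomial degree as the undotted operators, and one can verify the identity on monomials by inspection. With interactions, the time derivatives acquire higher-order pieces, and one has to track how the Wick contractions transform under commutation with each interaction vertex. The required cancellations are guaranteed by the algebraic structure of the Fock algebra, but making them manifest is the only step of the proof that demands genuine bookkeeping.
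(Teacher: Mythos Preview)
The paper does not supply an explicit proof of the Corollary; it is stated as an immediate consequence of the preceding Claim. The implicit argument is the one you outline: Eq.~\eqref{best} expresses $\hat{\sO}|_S$ purely in terms of the Fock-space operators $\hat{\sO}$ and $\hat{N}_S$, so the Heisenberg equation $\frac{d}{dt}\hat{\sO}|_S=i[\hat{H},\hat{\sO}|_S]$ can be evaluated for an arbitrary interacting $\hat{H}$, and a Leibniz-type splitting separates the piece driven by $\dot{\hat{N}}_S=\hat{J}|_S$ from the piece driven by $\dot{\hat{\sO}}$. At this level of resolution your strategy and the paper's coincide.

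Your detailed execution, however, has two genuine gaps. First, the identification of the first bracket with $\dot{\hat{\sO}}|_S$ by ``direct application of \eqref{best}'' fails precisely in the interacting case you are trying to cover: the prefactor $1/N$ in \eqref{best} is the body number of the operator being partitioned, and $i[\hat{H},\hat{\sO}]$ is generically of body number $>N$ when $\hat{H}$ contains interactions. The first bracket therefore is \emph{not} the partition of $\dot{\hat{\sO}}$, and with it the interpretation of $\hat{\Sigma}_{\sO}|_S$ as $[\tfrac{d}{dt}\hat{\sO}]|_S$ is lost. Second, the normal-ordering Leibniz identity is not well posed as stated: the symbol $\normord{\dot{\hat{\sO}}\hat{N}_S}$ depends on the particular polynomial representative chosen for $\dot{\hat{\sO}}$, not on the operator alone, so ``the contractions of the time-evolved expression'' is ambiguous. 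Your inductive sketch does not resolve this, and the asserted cancellations are automatic only when $\hat{H}$ is one-body, since then $[\hat{H},c^{(\dagger)}]$ remains a single ladder operator and no reordering arises. You may still \emph{define} the two brackets to be $\hat{\Sigma}_{\sO}|_S$ and $\hat{J}_{\sO}|_S$, which recovers \eqref{contflux} tautologically (and is likely all the paper intends), but the stronger structural claims in your proposal do not go through as written.
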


Following the same arguments, one can show that the density of many-body observables is similarly defined.
\begin{claim}
    The density of any $N$-body observable $\hat{\sO}$ may be defined as
    \begin{align}
        \hat{\rho}_{\sO}(x)=\frac{1}{2N}\acom{\hat{\sO}}{\hat{\rho}(x)}-\frac{1}{N}\normord{\hat{\sO}\hat{\rho}(x)}.
    \end{align}
\end{claim}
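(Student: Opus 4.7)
The plan is to mirror the proof of the preceding claim almost verbatim, exploiting the correspondence between the first-quantized operator $\ket{x}\bra{x}$ and the second-quantized density operator $\hat{\rho}(x) = \hat{\psi}^\dagger(x)\hat{\psi}(x)$, in exactly the same way the preceding claim exploits the correspondence between $\P_S$ and $\hat{N}_S$. Concretely, I would first define the $N$-body density operator in first quantization via the natural analog of \cref{ON},
\begin{align}
\rho_{\sO}(x) \equiv \frac{1}{2}\acom{O}{\frac{1}{N}\hat{\bigoplus}_{i=1}^N \ket{x}\bra{x}},
\end{align}
and take $\hat{\rho}_{\sO}(x)$ to be its second quantization, in complete parallel with \cref{hdensity} and its many-body generalization. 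The goal is then to show that this second-quantized object equals the claimed anticommutator-minus-normal-ordered expression.

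The key step is to apply Wick's theorem to $\frac{1}{2N}\acom{\hat{\sO}}{\hat{\rho}(x)}$, expanding it into a fully normal-ordered piece plus a sum of single-contraction terms (double and higher contractions vanish because $\hat{\rho}(x)$ contains only a single creation/annihilation pair). The normal-ordered piece is $\frac{1}{N}\normord{\hat{\sO}\hat{\rho}(x)}$, since the anticommutator symmetrizes the operator ordering inside the normal product. The contraction terms are where the $N$ creation and $N$ annihilation operators of $\hat{\sO}$ each have one partner in $\hat{\psi}(x)$ or $\hat{\psi}^\dagger(x)$: there are $N$ equivalent contractions on each side, which cancels the $1/N$ coming from the symmetric direct sum in the first-quantized definition. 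Matching these contraction terms against the direct second quantization of $\rho_{\sO}(x)$ — done by writing $\frac{1}{N}\hat{\bigoplus}_{i=1}^N\ket{x}\bra{x}$ explicitly and noting that each slot contributes identically after symmetrization of the $N$-body operator — yields exactly $\hat{\rho}_{\sO}(x)$, and solving the resulting identity for $\hat{\rho}_{\sO}(x)$ gives the claim.

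The main obstacle is the combinatorial bookkeeping: one must verify that the factor $1/N$ weighting the symmetric direct sum is precisely compensated by the $N$ equivalent Wick contractions between $\hat{\rho}(x)$ and the $N$ pairs of ladder operators in $\hat{\sO}$, and that the fermionic signs from moving $\hat{\psi}^\dagger(x)$ and $\hat{\psi}(x)$ past the ladder operators of $\hat{\sO}$ arrange themselves the same way on the two sides of the anticommutator so that no sign mismatch spoils the cancellation. This is the same combinatorial identity that underlies the discrete claim, and it carries over because $\ket{x}\bra{x}$ enters the contractions in the same algebraic role as $\ket{n}\bra{n}$ did for $\P_S = \sum_{\ket{n}\in\sH_S}\ket{n}\bra{n}$; the only substantive change is that discrete Kronecker deltas are replaced by Dirac deltas in the field-operator anticommutators $\{\hat{\psi}(x),\hat{\psi}^\dagger(y)\} = \delta(x-y)$.
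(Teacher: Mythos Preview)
Your proposal is correct and is precisely the argument the paper intends: it states only that the density result follows ``following the same arguments'' as the discrete claim, i.e., by applying Wick's theorem to $\frac{1}{2N}\acom{\hat{\sO}}{\hat{\rho}(x)}$ and identifying the single-contraction terms with the second quantization of $\frac{1}{2}\acom{O}{\tfrac{1}{N}\hat{\bigoplus}_{i=1}^N\ket{x}\bra{x}}$, with the remaining normal-ordered piece giving $\frac{1}{N}\normord{\hat{\sO}\hat{\rho}(x)}$. Your observation that the argument carries over verbatim under the substitution $\P_S\to\ket{x}\bra{x}$, $\hat{N}_S\to\hat{\rho}(x)$, with Kronecker deltas replaced by Dirac deltas in the contractions, is exactly the content of the paper's one-line justification.
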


\section{Conclusions}

In this article, we have developed a framework to partition quantum observables based on a partition of the underlying single-particle Hilbert space.  We have provided explicit expressions for the partition of Fock-space operators corresponding to generic $N$-body observables.  For the case of an open quantum system, a bipartite partition between system (S) and reservoir (R) was applied to both the Hamiltonian and the entropy of the system.  The Hilbert-space partition was shown to correspond to a symmetric ($\alpha=1/2$) partition \cite{Ludovico,bruchQuantumThermodynamicsDriven2016,haughianQuantumThermodynamicsResonantlevel2018,kumar2024} of the off-diagonal 1-body observables, such as the coupling Hamiltonian $H_{SR}$ between system and reservoir.  Any partition 
with $\alpha\neq 1/2$ was shown to lead to a singular entropy, and hence does not provide a basis to construct a consistent thermodynamics from the statistical mechanics of the problem.

\acknowledgements
We thank Carter Eckel, Ferdinand Evers, Marco Jimenez, Parth Kumar, and Yiheng Xu for useful discussions.

\bibliographystyle{apsrev4-1} 
\bibliography{Bibs/main} 

\end{document}